\newtheorem{theorem}{Theorem}
\newtheorem{lemma}{Lemma}
\newtheorem{proposition}{Proposition}
\newtheorem{definition}{Definition}
\newcommand{\norm}[1]{\ensuremath{\left\| #1\right\|}}
\newcommand{\pbra}[1]{\ensuremath{\left( #1\right)}}
\newcommand{\cbra}[1]{\ensuremath{\left\{ #1\right\}}}
\newcommand{\abra}[1]{\ensuremath{\left< #1\right>}}
\newcommand{\oder}[2]{\ensuremath{\frac{d #1}{d #2}}}
\begin{document}
\begin{frontmatter}

\title{Semi-linear Poisson-mediated Flocking in a Cucker-Smale Model\thanksref{footnoteinfo}} 

\thanks[footnoteinfo]{This material is based upon work supported by the Defense Advanced Research Projects Agency (DARPA) under Agreement No. HR00111990027.
This paper has been accepted for publication in 
the 24th International Symposium on Mathematical Theory of Networks and Systems (MTNS 2020), Cambridge, UK.}

\author[First]{Christos N. Mavridis} 
\author[First]{Amoolya Tirumalai} 
\author[First]{John S. Baras}
\author[Second]{Ion Matei} 

\address[First]{Electrical and Computer Engineering Department and the Institute for Systems Research, 
	University of Maryland,	College Park, MD 20742 USA, (emails: {\tt $\{$mavridis, ast256, baras$\}$@umd.edu})}
\address[Second]{Palo Alto Research Center (PARC), Palo Alto, CA 94304 USA \\
(email: {\tt{imatei@parc.com}})}

\begin{abstract}
We propose a family of compactly supported 
parametric interaction functions in the general 
Cucker-Smale flocking dynamics such that the  
mean-field macroscopic system of mass and momentum balance equations with
non-local damping terms
can be converted from a system of partial integro-differential equations to an augmented system of partial differential equations in a compact set.
We treat the interaction functions as Green's functions for an operator corresponding to a semi-linear Poisson equation and compute the 
density and momentum in a translating reference frame, 
i.e. one that is taken in reference to the flock's centroid. This allows us to consider the dynamics in a fixed, flock-centered compact set without loss of generality.
We approach the computation of the non-local damping
using the standard finite difference treatment of the chosen differential operator, resulting in a tridiagonal system
which can be solved quickly.
\end{abstract}

\begin{keyword}
Control of Distributed Parameter Systems, Networked Control Systems, Large Scale Systems
\end{keyword}

\end{frontmatter}




\section{INTRODUCTION}
\label{Sec:Introduction}

Collective motion of autonomous agents is a widespread phenomenon 
appearing in numerous applications 
ranging from animal herding to complex networks and social dynamics
\citep{okubo1986dynamical, cucker2007emergent, giardina2008collective}.

In general, there are two broad approaches when investigating the underlying dynamics for flocks or swarms: 
the microscopic, particle models described by ordinary differential equations (ODEs) or stochastic differential equations, 
and the macroscopic continuum models, described by partial differential
equations (PDEs).
Agent-based models assume behavioral rules at the individual level, such as 
velocity alignment, attraction, and repulsion
\citep{cucker2007emergent, giardina2008collective, ballerini2008interaction}
and are often used in numerical simulations
and in learning schemes where the interaction rules are inferred
\citep{Matei2019}.
As the number of interacting agents gets large,
the agent-based models become computationally expensive \citep{carrillo2010particle}. Considering pairwise interactions, the growth is $O(N^2)$, where $N$ is the number of agents. 
As we approach the mean-field limit, it is useful to consider the probability density of the agents. Using Vlasov-like arguments
\citep{carrillo2010particle}, we can construct an equation analogous to the Fokker-Planck-Kolmogorov equation. We can then define momentum and density and construct a system of compressible hydrodynamic PDEs
\citep{carrillo2010particle, shvydkoy2017eulerian}.

In flocking dynamics 
\citep{cucker2007emergent, carrillo2010particle}, 
the velocity alignment term is not only nonlocal but can also be nonlinear \citep{shvydkoy2017eulerian, karniadakis_flockingDynamicsfractionalPDEs_2018}.
The computation of the corresponding hydrodynamic equations with nonlocal forces
becomes quite costly due to 
the approximation of the convolution integrals or integral transforms using
the various quadrature methods. The simplest `quadrature' method is the Riemann sum, whose complexity is $O(n^2)$, where $n$ is the number of grid points, when estimating a convolution integral as a convolution sum in one dimension. On the other hand, an equivalent solution may be obtained using finite differences if the interaction kernel is associated with a differential operator. If that operator can be put into a sparse form, ideally a tridiagonal form, a solution can be obtained efficiently.

In this work, we modify the classical Cucker-Smale model of nonlocal particle interaction for velocity consensus
\citep{cucker2007emergent, ha2009simple}.
We propose a family of parametric interaction functions in $\mathbb{R}^d$, $d \in \{ 1,2,3\}$,
that are Green's functions for appropriately defined linear partial 
differential operators, which allow us to speed-up computation of the nonlocal interaction terms.
We investigate the conditions under which time-asymptotic flocking is achieved in the microscopic formulation in a centroid-fixed frame.
We solve the macroscopic formulation using the Kurganov-Tadmor MUSCL finite volume method \citep{KURGANOV2000241} and a second-order finite difference discretization of our chosen differential operator. The method is compared to bulk variables computed from the microscopic formulation for validation.   

The rest of the manuscript is organized as follows: 
Section \ref{Sec:CS} introduces the agent-based Cucker-Smale flocking dynamics and 
the macroscopic Euler equations.
Section \ref{Sec:BVP} describes the conversion of the Euler equations 
to an augmented system of PDEs, and the formulation of the boundary value problem.
In Section \ref{Sec:Computations} a family of interaction functions is proposed and 
the computation process is explained.
Finally, Section \ref{Sec:Results} compares the numerical results and 
Section \ref{Sec:Conclusion} concludes the paper.


\section{Mathematical Models}
\label{Sec:CS}

In this section we introduce the Cucker-Smale dynamics under 
general interaction functions, define time-asymptotic flocking,
and present the mean-field macroscopic equations.

\subsection{The Cucker-Smale Model}

Consider an interacting system of $N$ identical autonomous agents
with unit mass in $\mathbb{R}^d$, $d \in \{1,2,3\}$. 
Let $x_i(t),\ v_i(t)\in\mathbb{R}^d$ represent the position and velocity 
of the $i^{th}$-particle at each time $t\geq 0$, respectively, for $1\leq i\leq N$.
Then the general Cucker-Smale dynamical system \citep{cucker2007emergent} of $(2N)$ ODEs
reads as:
\begin{equation}
\begin{cases}
\oder{x_i}{t} &= v_i \\
\oder{v_i}{t} &= \frac{1}{N}\sum_{j=1}^{N}\psi(x_j,x_i)(v_j-v_i) 
\end{cases}
\label{eq:cs}
\end{equation}
where $x_i(0)$, are $v_i(0)$ are given for all $i = 1,\ldots,N$, 
and ${\psi:\mathbb{R}^d \times \mathbb{R}^d \rightarrow \mathbb{R}}$ 
represents the interaction function between each pair of particles.

The center of mass system $(x_c,v_c)$ of $\cbra{(x_i,v_i)}_{i=1}^N$
is defined as
\begin{equation}
x_c = \frac 1 N \sum_{i=1}^N x_i, \quad v_c = \frac 1 N \sum_{i=1}^N v_i
\end{equation}
When $\psi$ is symmetric, i.e., $\psi(x,s)=\psi(s,x)$, 
system (\ref{eq:cs}) implies
\begin{equation}
\oder{x_c}{t}=v_c,\quad \oder{v_c}{t}=0
\end{equation}
which gives the explicit solution
\begin{equation}
x_c(t) = x_c(0) + t v_c(0),\ t\geq 0
\label{eq:center_of_mass}
\end{equation}

\subsection{Asymptotic Flocking}
\label{sSec:flocking}

We investigate the additional assumptions on the initial conditions and 
the interaction function $\psi$, such that
system (\ref{eq:cs}) converges to a velocity consensus, 
a phenomenon known in the literature as \emph{time-asymptotic flocking},
defined in terms of the center of mass system as
\begin{definition}[Asymptotic Flocking]
An $N-$body interacting system $\mathcal{G}=\cbra{(x_i,v_i)}_{i=1}^N$ 
exhibits time-asymptotic flocking if and only if the following two relations hold:
\begin{itemize}
\item(Velocity alignment): 
$\lim_{t\rightarrow\infty}\sum_{i=1}^N \norm{v_i(t)-v_c(t)}^2=0$ ,
\item(Spatial coherence): 
$\sup_{0\leq t \leq\infty} \sum_{i=1}^N \norm{x_i(t)-x_c(t)}^2 <\infty$ .
\end{itemize}
\label{def:flocking}
\end{definition}

We consider the new variables 
\begin{equation}
(\hat x_i, \hat v_i):=(x_i-x_c, v_i-v_c)
\label{eq:fluctuations}
\end{equation}
which correspond to the fluctuations around the center of mass system,
and define
$\hat x:=(\hat x_1,\ldots,\hat x_N)$, $\hat v:=(\hat v_1,\ldots,\hat v_N)$, 
$|\hat x| = \pbra{\sum_{i=1}^N \|\hat x_i\|^2}^{1/2}$, and 
$|\hat v| = \pbra{\sum_{i=1}^N \|\hat v_i\|^2}^{1/2}$,
where $\|\cdot\|$ represents the standard $l_2$-norm 
in $\mathbb{R}^d$.
Based on Definition \ref{def:flocking},
asymptotic flocking is achieved if 
\begin{equation}
|\hat x(t)|<\infty, t\geq 0,\ \text{and }
\lim_{t\rightarrow\infty} |\hat v(t)|=0
\end{equation}
We first notice that 
\begin{equation}
\oder{|\hat x|^2}{t}=2\abra{\oder{\hat x}{t},\hat x}\leq 2|\hat x||\hat v|
\end{equation}
which implies 
\begin{equation}
\oder{|\hat x|}{t} \leq |\hat v|
\label{eq:bound_x}
\end{equation}
Suppose the interaction function $\psi$ is chosen such that 
$\psi(x,s)=\tilde\psi(\|x-s\|)$, 
with $\tilde\psi:\mathbb{R}_+\rightarrow\mathbb{R}_+$ being a
non-negative and non-increasing function.
Then $(\hat x_i, \hat v_i)$ are governed by the dynamical system (\ref{eq:cs}), and
\begin{equation}
\begin{aligned}
\oder{|\hat v|^2}{t} &=-\frac 1 N \sum_{1\leq i,j\leq N} \tilde\psi(\|\hat x_j-\hat x_i\|)\|\hat v_j-\hat v_i\|^2 \\
&\leq -\frac 1 N \tilde\psi(2|\hat x|) \sum_{1\leq i,j\leq N} \|\hat v_j-\hat v_i\|^2 \\
&= -\frac 2 N \tilde\psi(2|\hat x|) |\hat v|^2 
\end{aligned}
\end{equation}
which implies
\begin{equation}
\oder{|\hat v|}{t} \leq -\frac 2 N \tilde\psi(2|\hat x|) |\hat v| :=-\phi(|\hat x|)|\hat v|
\label{eq:bound_v}
\end{equation}
where we have used the fact that  
$\sum_{i=1}^N \hat v_i(t)=0$, $t\geq 0$, and
\begin{equation}
\max_{1\leq i,j\leq N}\|\hat x_i-\hat x_j\|\leq 2|\hat x|
\label{eq:xixj_bound}
\end{equation}
The following Theorem by \citep{ha2009simple} provides 
sufficient conditions for time-asymptotic flocking.
\begin{theorem}
Suppose $(|x|,|v|)$ satisfy the system of dissipative differential inequalities 
(\ref{eq:bound_x}), (\ref{eq:bound_v}) with $\phi\geq 0$. 
Then if $|v(0)|<\int_{|x(0)|}^\infty \phi(s) ds $,
there is a $x_M\geq 0$ such that
$|v(0)|=\int_{|x(0)|}^{x_M} \phi(s) ds$, 
and for every $t\geq 0$, $|x(t)|\leq x_M$, and 
$|v(t)|\leq |v(0)|e^{-\phi(x_M)t}$.
\label{thm:flocking}
\end{theorem}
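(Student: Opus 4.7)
The plan is to construct a Lyapunov-type functional coupling $|v|$ with the primitive of $\phi$ evaluated along the trajectory, show it is non-increasing by exploiting the sign structure of the two inequalities, and then extract the spatial coherence bound and the exponential velocity decay as separate consequences.

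Concretely, I would set $\mathcal{L}(t) := |v(t)| + \int_{|x(0)|}^{|x(t)|}\phi(s)\,ds$ and differentiate along trajectories. Since $\phi \geq 0$, the bound $\oder{|x|}{t} \leq |v|$ from (\ref{eq:bound_x}) implies $\phi(|x|)\oder{|x|}{t} \leq \phi(|x|)|v|$, which exactly cancels the $-\phi(|x|)|v|$ contribution supplied by (\ref{eq:bound_v}). Thus $\oder{\mathcal{L}}{t}\leq 0$ and integrating from $0$ to $t$ yields the invariant $|v(t)| + \int_{|x(0)|}^{|x(t)|}\phi(s)\,ds \leq |v(0)|$.

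From here the two conclusions split cleanly. For spatial coherence, dropping the non-negative $|v(t)|$ gives $\int_{|x(0)|}^{|x(t)|}\phi(s)\,ds \leq |v(0)|$; combined with the hypothesis $|v(0)| < \int_{|x(0)|}^{\infty}\phi(s)\,ds$, the intermediate value theorem produces an $x_M \geq |x(0)|$ with $\int_{|x(0)|}^{x_M}\phi(s)\,ds = |v(0)|$, and monotonicity of the cumulative integral forces $|x(t)| \leq x_M$. For the velocity decay, I use that $\phi$ inherits the non-increasing property of $\tilde\psi$, so $|x(t)| \leq x_M$ gives $\phi(|x(t)|) \geq \phi(x_M)$; feeding this back into (\ref{eq:bound_v}) reduces matters to $\oder{|v|}{t} \leq -\phi(x_M)|v|$, and Gronwall's lemma delivers $|v(t)| \leq |v(0)|e^{-\phi(x_M)t}$.

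The principal technical nuisance is that $|x(t)|$ and $|v(t)|$ may fail to be classically differentiable at instants where they vanish. The standard remedy is to work with the squared quantities $|x|^2$ and $|v|^2$ (the same trick already used in deriving (\ref{eq:bound_x}) and (\ref{eq:bound_v})) or to replace $\oder{\cdot}{t}$ by the upper Dini derivative, both of which preserve all comparison arguments. A secondary subtlety arises when $\phi$ has plateaus: deducing $|x(t)| \leq x_M$ from $F(|x(t)|) \leq F(x_M)$ with $F(x):=\int_{|x(0)|}^x \phi(s)\,ds$ only weakly increasing is resolved by selecting $x_M$ as the infimum of $F^{-1}(|v(0)|)$, which is permissible since the hypothesis guarantees this set is non-empty and bounded below.
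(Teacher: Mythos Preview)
Your proposal is correct and follows exactly the Ha--Liu Lyapunov approach: the paper states Theorem~\ref{thm:flocking} without proof (citing \citep{ha2009simple}), but reproduces precisely this argument in Section~\ref{Sec:Computations} for the bounded-domain variant, using the same functional $V(|\hat x|,|\hat v|)=|\hat v|+\int_\alpha^{|\hat x|}\phi(s)\,ds$, the same monotonicity computation, and the same Gr\"onwall step. The only cosmetic difference is that the paper phrases the spatial bound via contradiction rather than via monotonicity of $F(x)=\int_{|x(0)|}^{x}\phi(s)\,ds$, and your remarks on Dini derivatives and plateaus of $\phi$ are in fact more careful than what appears there.
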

The following is an immediate consequence of Theorem \ref{thm:flocking}.
\begin{proposition}
Let $\mathcal{G}=\cbra{(x_i,v_i)}_{i=1}^N$ be an $N-$body interacting system
with dynamics given by (\ref{eq:cs}). 
Suppose 
$\psi(x,s)=\tilde\psi(\|x-s\|)$, 
with $\tilde\psi:\mathbb{R}_+\rightarrow\mathbb{R}_+$ being a
non-negative and non-increasing function.
Then if $|v(0)-v_c(0)|<\int_{|x(0)-x_c(0)|}^\infty \frac 2 N \tilde\psi(2s) ds $,
$\mathcal{G}$ exhibits time-asymptotic flocking.
\label{pro:flocking}
\end{proposition}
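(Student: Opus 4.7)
The plan is to reduce to the centered fluctuation variables $(\hat x,\hat v)$ and invoke Theorem~\ref{thm:flocking} with the choice $\phi(r)=\frac{2}{N}\tilde\psi(2r)$.

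First I would observe that the hypothesis $\psi(x,s)=\tilde\psi(\|x-s\|)$ makes $\psi$ symmetric, so by (\ref{eq:center_of_mass}) the centroid evolves as $x_c(t)=x_c(0)+tv_c(0)$ with $v_c$ constant. Substituting $(\hat x_i,\hat v_i)=(x_i-x_c,v_i-v_c)$ into (\ref{eq:cs}) and using translation invariance of $\tilde\psi(\|\cdot\|)$ shows $(\hat x_i,\hat v_i)$ obey the same Cucker-Smale system and in particular satisfy $\sum_{i=1}^N \hat v_i(t)\equiv 0$, as already exploited earlier in the text.

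Next, the inequalities (\ref{eq:bound_x}) and (\ref{eq:bound_v}) already established in the text give
\begin{equation*}
\oder{|\hat x|}{t}\leq |\hat v|, \qquad \oder{|\hat v|}{t}\leq -\phi(|\hat x|)|\hat v|,
\end{equation*}
with $\phi(r):=\frac{2}{N}\tilde\psi(2r)\geq 0$. The scalar pair $(|\hat x(t)|,|\hat v(t)|)$ thus satisfies precisely the dissipative differential-inequality system required by Theorem~\ref{thm:flocking}.

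Finally, under the identifications $|\hat x(0)|=|x(0)-x_c(0)|$ and $|\hat v(0)|=|v(0)-v_c(0)|$, the standing assumption of the proposition is exactly $|\hat v(0)|<\int_{|\hat x(0)|}^{\infty}\phi(s)\,ds$. Theorem~\ref{thm:flocking} then yields a finite $x_M\geq 0$ with $|\hat x(t)|\leq x_M$ for all $t\geq 0$, which gives spatial coherence, and $|\hat v(t)|\leq|\hat v(0)|e^{-\phi(x_M)t}\to 0$, which gives velocity alignment; together these recover Definition~\ref{def:flocking}. The argument is essentially a translation exercise, and no real obstacle appears since the heavy lifting lives in the pre-derived inequalities and in Theorem~\ref{thm:flocking}. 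The only point requiring care is verifying that the non-increasing property of $\tilde\psi$ is what justifies the bound $\tilde\psi(\|\hat x_j-\hat x_i\|)\geq\tilde\psi(2|\hat x|)$ used via (\ref{eq:xixj_bound}); without it the scalar comparison system would not dominate the true dissipation rate, and the hypothesis of Theorem~\ref{thm:flocking} could not be invoked.
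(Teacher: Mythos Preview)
Your proposal is correct and follows precisely the route the paper takes: the paper states the proposition as ``an immediate consequence of Theorem~\ref{thm:flocking},'' relying on the already-derived dissipative inequalities (\ref{eq:bound_x}) and (\ref{eq:bound_v}) with $\phi(r)=\frac{2}{N}\tilde\psi(2r)$, exactly as you spell out. Your added remark that the strict inequality in the hypothesis together with the monotonicity of $\tilde\psi$ is what forces $\phi(x_M)>0$ (hence genuine exponential decay) is a useful clarification the paper leaves implicit.
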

%



\subsection{The Mean-Field Limit}

%
%
Consider the empirical joint probability distribution of the particle positions and velocities $\{x_i,v_i\}_{i=1}^N$
\begin{equation}
    F_{xv}^N(t,x,v) := \frac{1}{N}\sum_{i=1}^N \delta(x_i,v_i)
\end{equation}
where $\delta(\cdot, \cdot)$ is the Dirac measure on $\mathbb R^{2d}$. 
As the number of particles $N \rightarrow \infty$, we can use
McKean-Vlasov arguments to show that the empirical distribution
converges weakly to a distribution whose density $f_{xv}$ evolves according to the
forward Kolmogorov
equation \citep{carrillo2010particle}
\begin{equation}\label{fpke}
\begin{split}
    &\partial_t f_{xv} + \nabla_x\cdot (vf_{xv})+\nabla_v\cdot (Af_{xv}) = 0\\
    &A := \int_{\mathbb{R}^{2d}}\psi(x,s)(w-v)f_{xv}(t,s,w)dsdw.
    \end{split}
\end{equation}

We define
\begin{equation}
    \begin{split}
        \rho(t,x) &:= \int_{\mathbb R^d} f_{xv}(t,x,v) dv \\
        m(t,x)&:=\rho(t,x)u(t,x) := \int_{\mathbb R^d} v f_{xv}(t,x,v) dv.
    \end{split}
\end{equation}

which are the marginal probability and momentum density functions. 
Substituting these into (\ref{fpke}) yields the following $(d+1)$ compressible Euler
equations with non-local forcing:
\begin{equation}
\begin{cases}
\partial_t{\rho} + \nabla_x \cdot (\rho u ) = 0 \\
\partial_t{(\rho u)} + \nabla_x \cdot (\rho u \otimes u) = 
\rho \mathcal L_\psi (\rho u) - \rho u \mathcal L_\psi \rho
\end{cases}
\label{eq:euler}
\end{equation}
where $u$ is the mean velocity, $\rho(0,x)$ and $u(0,x)$ are given and 
\begin{equation}
\mathcal L_\psi f  (t,x):= \int_{\mathbb{R}^d} \psi(x,s)f(t,s) ds.
\label{eq:convolution}
\end{equation}
%


\section{Semi-linear Poisson Mediated Flocking}
\label{Sec:BVP}


\subsection{Conversion to a system of PDEs}

We think of the function $\psi$ as a Green's function, i.e., 
as the impulse response of a linear differential equation, 
represented by the operator $\mathcal{L}_x$,
such that
%
%
%
\begin{equation}
    \mathcal{L}_x y(t,x) = g(t,x)
\end{equation}
implies
\begin{equation}
    y(t,x) = \int_{\mathbb{R}^d} \psi(x,s) g(t,s) ds
    \label{eq:greens}
\end{equation}
which results in 
\begin{equation}
    \mathcal L^{-1}_\psi = \mathcal L_x
\end{equation}
for all $t\geq 0 $, where 
%
%
\begin{align}
\mathcal{L}_x\psi(x,s) = \delta(x-s),\ x,s\in\mathbb{R}^d.
\end{align}
Then the following proposition holds:
\begin{proposition}
Suppose $\psi$ is a Green's function 
with respect to a linear differential operator $\mathcal{L}_x$.
Then system (\ref{eq:euler}) 
is equivalent to the augmented system 
of ($2d+2$) partial differential equations:
\begin{equation}
\begin{cases}
\partial_t{\rho} + \nabla_x \cdot (\rho u) = 0 \\
\mathcal{L}_x y = \begin{bmatrix} \rho u & \rho \end{bmatrix}^T\\
\partial_t{(\rho u)} + \nabla_x \cdot (\rho u \otimes u) = 
\sum_{i=1}^{d} (\rho y_i - \rho u_i y_{d+1}) \cdot \hat e_i
\end{cases}
\label{eq:pdes}
\end{equation}
where $\cbra{\hat e_i}_{i=1}^d$ is the standard basis in $\mathbb{R}^d$.
\end{proposition}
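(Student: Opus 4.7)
The plan is to verify the claimed equivalence by using the defining property of $\psi$ as a Green's function to eliminate the nonlocal convolution operators in the momentum equation of (\ref{eq:euler}) in favor of the local equation $\mathcal{L}_x y = [\rho u,\ \rho]^T$. The heart of the argument is that $\mathcal{L}_\psi$ and $\mathcal{L}_x$ are mutual inverses on a suitable function space, so the integral terms $\mathcal{L}_\psi(\rho u)$ and $\mathcal{L}_\psi\rho$ are exactly the unique solutions $y_{1:d}$ and $y_{d+1}$ of a spatial boundary-value problem, solved at each instant $t$.

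First, I would fix $t\geq 0$ as a parameter and treat $\mathcal{L}_x$ as acting solely in the spatial variable. Using (\ref{eq:greens}) with $g(t,\cdot) = \rho(t,\cdot)u_i(t,\cdot)$, one checks that $y_i(t,x) := \int_{\mathbb{R}^d}\psi(x,s)\rho(t,s)u_i(t,s)\,ds = [\mathcal{L}_\psi(\rho u)]_i(t,x)$ satisfies $\mathcal{L}_x y_i = \rho u_i$ for $i=1,\ldots,d$, and similarly $y_{d+1}(t,x) := \mathcal{L}_\psi\rho(t,x)$ satisfies $\mathcal{L}_x y_{d+1} = \rho$. Stacking these $d+1$ identities recovers the middle line of (\ref{eq:pdes}).

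Second, I would substitute these identifications into the momentum balance in (\ref{eq:euler}). Writing the right-hand side componentwise, the $i$-th entry of $\rho\mathcal{L}_\psi(\rho u) - \rho u\,\mathcal{L}_\psi\rho$ becomes $\rho y_i - \rho u_i y_{d+1}$, and contracting against $\hat e_i$ and summing over $i$ gives exactly the right-hand side of the last line of (\ref{eq:pdes}). Conversely, given a solution of (\ref{eq:pdes}), applying $\mathcal{L}_\psi = \mathcal{L}_x^{-1}$ to the middle equation at each $t$ recovers $y_i = \mathcal{L}_\psi(\rho u_i)$ and $y_{d+1} = \mathcal{L}_\psi\rho$, which when substituted back yields the nonlocal Euler system (\ref{eq:euler}). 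The continuity equation and the transport term on the left of the momentum equation carry over verbatim, since they are untouched by the change of variables.

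The principal obstacle is that the statement $\mathcal{L}_\psi^{-1} = \mathcal{L}_x$ presupposes a class of functions on which the Green's function representation (\ref{eq:greens}) picks out a \emph{unique} $y$, i.e., on which $\mathcal{L}_x$ is invertible. In general $\mathcal{L}_x y = g$ is only determined up to homogeneous solutions of $\mathcal{L}_x$, so to avoid ambiguity one must either restrict to a decay class at infinity (e.g.\ $y(t,\cdot)\in L^2(\mathbb{R}^d)$ or Schwartz) or, as is relevant to the compactly supported $\psi$ constructed later in Section \ref{Sec:Computations}, impose boundary conditions on a sufficiently large truncated domain containing the support of $\rho(t,\cdot)$. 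I would state this as a regularity/boundary hypothesis at the top of the proof; once it is in force, the two substitutions above are purely algebraic and the equivalence is immediate.
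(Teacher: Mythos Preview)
Your proposal is correct and follows exactly the approach the paper takes: the paper's argument consists entirely of the preamble before the proposition, namely the observation that $\mathcal{L}_\psi^{-1}=\mathcal{L}_x$ via the Green's function identity (\ref{eq:greens}), after which the proposition is stated without further proof. Your write-up is in fact more careful than the paper's, since you make the componentwise substitution explicit and flag the uniqueness/boundary-condition hypothesis needed for $\mathcal{L}_x$ to be genuinely invertible, which the paper only addresses implicitly later when it fixes homogeneous Dirichlet conditions on a bounded domain.
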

%

%
\subsection{The Boundary Value Problem}

Due to the time-dependence of the center of mass (\ref{eq:center_of_mass}),
$x_i$, $i=1,\ldots,N$, will escape any fixed and open bounded domain $\Omega\subset\mathbb{R}^d$, 
unless in the trivial case where $v_c(0)=0$.
Because of the flocking behavior (Definition \ref{def:flocking}),
the position fluctuations with respect to the center of mass are uniformly bounded, i.e.,
\begin{equation}
    \sup_{0\leq t \leq\infty} \sum_{i=1}^N \norm{x_i(t)-x_c(t)}^2 <\infty
\end{equation}
and, therefore we can define a Boundary Value Problem (BVP) in the moving domain 
\begin{equation}
    \Omega_c(t)=\cbra{x+x_c(t):x\in\Omega}
\end{equation}
where it is assumed that $0_d\in\Omega$, $0_d$ being the origin of $\mathbb{R}^d$.

We notice that solving system (\ref{eq:pdes}) 
for $(x,u)$, $x\in\Omega_c$ is equivalent to solving it 
for the fluctuation variables $(\hat x,\hat u)$ (\ref{eq:fluctuations}), 
with $\hat x\in\Omega$.
%
%
%
%
%
%
%

We note that the boundedness of the domain 
has an effect on both the Green's function
and the flocking behavior of the system of interacting particles,
which should satisfy 
\begin{equation}
    x_i(t)-x_c(t) \in\Omega,\ i=1,\ldots,N,\ t\geq 0 .
\end{equation}





\section{One-Dimensional Case}
\label{Sec:Computations}

The BVP of the augmented system of PDEs (\ref{eq:pdes}) for $d=1$, 
on $\Omega = \{\hat x \in [-\frac{L}{2},\frac{L}{2}]\}$ reads as:
%
\begin{equation}
\begin{cases}
\partial_t{\rho} + \partial_{\hat x}(\rho u) = 0 \\
\mathcal{L}_{\hat x} y = \begin{bmatrix} \rho u & \rho  \end{bmatrix}^T \\
\partial_t(\rho u) + \partial_{\hat{x}} (\rho u^2)= 
	\rho y_1 - \rho u y_2
\end{cases}
\label{eq:pdes1}
\end{equation}
with homogeneous Dirichlet boundary conditions and initial conditions
\begin{equation}
        \rho(0, \hat{x}) = \rho_0(\hat x), \quad
        u(0, \hat{x}) = u_0(\hat x) 
\label{eq:bvp_pdes}
\end{equation}
which are smooth functions.

We select the linear partial differential operator
\begin{equation}\label{op}
\mathcal{L}_x= -\frac{1}{2k}(\frac{\partial^2}{\partial x^2} - \lambda^2)
\end{equation}
with $k\neq 0$ and $\lambda \neq 0$, for which the associated parametric family of Green's functions with homogeneous Dirichlet boundary conditions on $[0,L]$ reads as:

\begin{equation}
\hat\psi(x,s) = 
\begin{cases}
c_1(s) ( e^{\lambda x} - e^{-\lambda x}) & s\leq x \\
c_2(s) ( e^{\lambda(x - 2L)} - e^{-\lambda x} ) & s>x 
\end{cases}
\label{eq:psi}
\end{equation}
\begin{equation}
\begin{aligned}
c_1(s) &=  \frac {k}{\lambda(e^{-2L\lambda}-1)} (e^{\lambda(s-2L)} - e^{-\lambda s}) \\
c_2(s) &= \frac {k}{\lambda(e^{-2L\lambda}-1)}(e^{\lambda s} - e^{-\lambda s})
\end{aligned}
\label{eq:psi_coef}
\end{equation}
The solution over any interval of length $L$ can be obtained by a simple translation of coordinates.


The profile of the Green's function $\hat \psi$ and the effect of the bounded domain on on it is 
illustrated in Fig. \ref{fig:psi}, where, 
for different fixed values of $x$,
$\hat\psi(x,s)$ is compared to the function 
\begin{equation}
  \psi(x,s) = \frac{k}{\lambda}e^{-\lambda\|x-s\|}  
\end{equation}
which is the Green's function corresponding to $\mathcal{L}_x$ in an infinite domain.
We note that the parameters $(k,\lambda,L)$ generate a family of interaction functions 
(see also \citep{mavridis2020learning})
that can simulate widely used interaction functions as the one found in the original Cucker-Smale model
\citep{cucker2007emergent}:
\begin{equation}
G(x,s) = \frac{K}{(1+\|x-s\|^2)^{\gamma}}    
\end{equation}
for given parameters $(K,\gamma)$.
 
\begin{figure}[ht!]
        \centering
        \includegraphics[trim=5 5 5 40, clip,width=.4\textwidth]{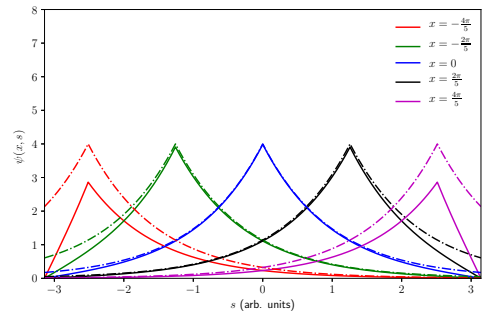} 
        \caption{Illustration of $\hat\psi(x,\cdot)$ (\ref{eq:psi}) for different values of $x$, and for 
        $\lambda=1$, $k=4$ on $[-\pi, \pi]$. 
        The function $ \psi(x,s)= \frac{k}{\lambda}e^{-\lambda\|x-s\|}$, which is the 
        Green's function for $\mathcal{L}_x$ in infinite domain, is depicted in the dashed-dotted lines.}
        \label{fig:psi}
\end{figure}



\subsection{Asymptotic Flocking}

Next we provide sufficient conditions such that the solution $\cbra{(x_i(t),v_i(t))}_{i=1}^N$, $t\geq 0$, 
of system (\ref{eq:cs}) with interaction function $\hat\psi$ 
as defined in (\ref{eq:psi}), (\ref{eq:psi_coef}), satisfy the 
flocking conditions in Definition \ref{def:flocking}, with 
$\hat x_i(t)\in\Omega$, for all $t\geq 0$.


Similar to Section \ref{sSec:flocking}, 
we notice that 
\begin{equation}
\oder{|\hat x|}{t} \leq |\hat v|
\label{eq:bound_x1}
\end{equation}
From (\ref{eq:xixj_bound}) and the fact that 
$\|\hat x_i\|\leq \max_{1\leq i,j\leq N}\|\hat x_i-\hat x_j\|$, we get 
\begin{equation}
|\hat x|\leq \frac{\hat x_M}{2} \implies \|\hat x_i\| \leq \hat x_M,\ i=1,\ldots,N
\end{equation}
Therefore, we are interested in showing asymptotic flocking with 
$|\hat x(t)|\in[0,\frac{\hat x_M}{2}]$, for all $t\geq 0$.

For any given initial conditions $|\hat x_i(0)|$, 
there is a large enough value of $L$ such that there exist an $\hat x_M\in[0,\frac L 2)$
for which 
\begin{equation}
\hat x_M > 2|\hat x(0)|
\end{equation}

From (\ref{eq:psi}), (\ref{eq:psi_coef}) it follows that
for $|\hat x| \leq \frac{\hat x_M}{2}$, 
\begin{equation}
\begin{aligned}
\hat\psi(x_j,x_i) &\geq \hat\psi(-\hat x_M, \|\hat x_j-\hat x_i\|) \\
&\geq \hat\psi(-\hat x_M, 2|\hat x|)
\end{aligned}
\end{equation}
which implies that
\begin{equation}
\begin{aligned}
\oder{|\hat v|^2}{t} &=-\frac 1 N \sum_{1\leq i,j\leq N} \hat\psi(\hat x_j, \hat x_i)\|\hat v_j-\hat v_i\|^2 \\
&\leq -\frac 2 N \hat\psi(-\hat x_M, 2|\hat x|) |\hat v|^2 
\end{aligned}
\end{equation}
and 
\begin{equation}
\oder{|\hat v|}{t} \leq -\frac 2 N \hat\psi(-\hat x_M, 2|\hat x|) |\hat v| :=-\phi(|\hat x|)|\hat v|
\label{eq:bound_v1}
\end{equation}

Next we notice that the Lyapunov function
\begin{equation}
V(|x|,|v|):=|\hat v|+ \int_{\alpha}^{|\hat x|} \phi(s) ds,\ \alpha\geq 0 
\end{equation}
is non-increasing along the solutions of $(|\hat x(t)|,|\hat v(t)|)$ 
of the system of dissipative differential inequalities 
(\ref{eq:bound_x}) and (\ref{eq:bound_v}), 
for $|\hat x(t)| \leq \frac{\hat x_M}{2}$, since
\begin{equation}
\begin{aligned}
\oder{}{t}V(|\hat x|,|\hat v|) & = \oder{|\hat v|}{t} + \phi(|\hat x|)\oder{|\hat x|}{t} \\
&\leq \phi(|\hat x|) 
\pbra{ -|v| + \oder{|\hat x|}{t}} \\
&\leq 0
\end{aligned}
\end{equation}
which implies that 
\begin{equation}
|\hat v(t)|+ \int_{|\hat x_0|}^{|\hat x|} \phi(s) ds 
\leq |\hat v(0)|,\ |\hat x| \leq \frac{\hat x_M}{2}
\label{eq:contradiction1}
\end{equation}

Choosing the initial velocity $|\hat v(0)|$ 
such that $|\hat v(0)|<\int_{|\hat x(0)|}^{\hat x_M/2} \phi(s)ds $,
and, since $\phi$ is non-negative for $|\hat x(t)| \leq \frac{\hat x_M}{2}$, 
there exists a $\bar x \in [|\hat x(0)|,\frac{\hat x_M}{2}]$ for which 
\begin{equation}
|\hat v(0)|=\int_{|\hat x(0)|}^{\bar x} \phi(s)ds 
\end{equation} 

Suppose there exists a $t^*\geq 0$, such that $\hat x^*:=|\hat x(t^*)|\in (\bar x,\frac{\hat x_M}{2}]$.
Then 
\begin{equation}
\int_{|\hat x(0)|}^{\hat x^*} \phi(s)ds > |v(0)| 
\end{equation}
which contradicts (\ref{eq:contradiction1}).
Therefore 
\begin{equation}
|\hat x(t)| \leq \bar x \leq \frac{\hat x_M}{2},\ t\geq 0  
\end{equation}
and from (\ref{eq:bound_v}) and the Grönwall-Bellman inequality 
\begin{equation}
|\hat v(t)| \leq |\hat v(0)|e^{-\phi(\bar x)t},\ t\geq 0.
\end{equation}
\subsection{Conservation of Mass and Momentum}
\begin{lemma} The operator (\ref{op}) $ 
\mathcal L_x$ on $C^{\infty}_{\mathbb R, C}(\Omega)$, the space
of compactly supported test functions, is self-adjoint
and invertible, and therefore has a self-adjoint inverse 
$\mathcal L_x^{-1}$ on $C^{\infty}_{\mathbb R, C}(\Omega)$.
\end{lemma}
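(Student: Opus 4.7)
The plan is to split the statement into three tasks: (i) formal self-adjointness of $\mathcal{L}_x$ on $C^\infty_{\mathbb R, C}(\Omega)$, (ii) invertibility via the Green's function $\hat\psi$ from (\ref{eq:psi})--(\ref{eq:psi_coef}), and (iii) self-adjointness of the inverse via symmetry of the kernel. With $\Omega = [-L/2, L/2]$ fixed, all three follow from integration by parts and the explicit construction already in place.

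First I would establish symmetry. For $f, g \in C^\infty_{\mathbb R, C}(\Omega)$, write
\begin{equation*}
\langle \mathcal{L}_x f, g\rangle = -\frac{1}{2k}\int_\Omega \bigl(f''(x) - \lambda^2 f(x)\bigr) g(x)\, dx
\end{equation*}
and integrate by parts twice. Because $f$ and $g$ vanish together with all derivatives in neighborhoods of $\pm L/2$, every boundary contribution drops out, and one obtains $\langle \mathcal{L}_x f, g\rangle = \langle f, \mathcal{L}_x g\rangle$. Since $k, \lambda \in \mathbb{R}$, the operator is therefore symmetric, and on this dense subspace of $L^2(\Omega)$ with homogeneous Dirichlet conditions it is essentially self-adjoint.

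Next I would show invertibility. The Green's function $\hat\psi(x,s)$ in (\ref{eq:psi})--(\ref{eq:psi_coef}) was constructed to satisfy $\mathcal{L}_x \hat\psi(\cdot, s) = \delta(\cdot - s)$ with homogeneous Dirichlet boundary data, so the candidate inverse is the integral operator $\mathcal{L}_x^{-1} g (x) := \int_\Omega \hat\psi(x, s)\, g(s)\, ds$. Injectivity of $\mathcal{L}_x$ is immediate: the operator $-\partial_x^2 + \lambda^2$ with Dirichlet boundary conditions on a bounded interval has strictly positive spectrum (eigenvalues of the form $(n\pi/L)^2 + \lambda^2$), so $\mathcal{L}_x f = 0$ with $f$ vanishing at the endpoints forces $f \equiv 0$. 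Self-adjointness of $\mathcal{L}_x^{-1}$ then follows either abstractly (the inverse of a self-adjoint invertible operator is self-adjoint) or concretely by verifying directly from (\ref{eq:psi}), (\ref{eq:psi_coef}) that $\hat\psi(x,s) = \hat\psi(s,x)$: for $s < x$, substituting $\tilde x = s$, $\tilde s = x$ into the $s > x$ branch of (\ref{eq:psi}) recovers the same expression $c_1(s)(e^{\lambda x} - e^{-\lambda x})$ after algebraic manipulation of $c_1, c_2$.

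The main obstacle I would expect is a technical subtlety rather than a deep one: $\mathcal{L}_x^{-1}$ does not in general map $C^\infty_{\mathbb R, C}(\Omega)$ into itself, because the Dirichlet solution associated with a compactly supported source only vanishes at the boundary, not in a neighborhood of it. To make the statement rigorous one should therefore either interpret invertibility in the sense of operators on a larger completion (e.g.\ $L^2(\Omega)$ or $H^1_0(\Omega)$, on which $\mathcal{L}_x$ admits the Friedrichs extension), or restrict the codomain to $\mathcal{L}_x(C^\infty_{\mathbb R, C}(\Omega))$; once that choice is fixed, the three steps above go through without further difficulty.
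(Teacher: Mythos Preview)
Your proposal is correct and follows essentially the same route as the paper: the paper proves symmetry of $\mathcal{L}_x$ via Green's second identity (which in one dimension is exactly your two integrations by parts with vanishing boundary terms), invokes the nontrivial Green's function for invertibility, and then cites the abstract fact that the inverse of a self-adjoint invertible operator is self-adjoint. Your additional remarks on injectivity via the Dirichlet spectrum, on the explicit symmetry $\hat\psi(x,s)=\hat\psi(s,x)$, and especially on the codomain subtlety (that $\mathcal{L}_x^{-1}$ does not land back in $C^\infty_{\mathbb R,C}(\Omega)$) go beyond the paper's own argument and are well taken.
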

\begin{proof} Self-adjointness of the inverse follows immediately from self-adjointness of $\mathcal L_x$ and the existence of the inverse \citep{taylor2010partial}. It is clear that $\mathcal L_x$ has an inverse since the Green's function is nontrivial.

We shall now show that the operator $\mathcal L_x$ is self-adjoint on $C^{\infty}_{\mathbb R, C}(\Omega)$.
Consider two functions $u,w \in C^{\infty}_{\mathbb R, C}(\Omega)$, $u\neq w$, 
the space
of test functions, and associated
$f_u, f_w \in C^{\infty}_{\mathbb R, C}$, $f_u := \mathcal L_x u, f_w := \mathcal L_x w$.
Let $\Omega := [-\frac{L}{2},\frac{L}{2}]$.
We have
\begin{equation}
    \int_\Omega (w \mathcal L_x u - u \mathcal L_x w)dx = 
    -\frac{1}{2k}\int_\Omega (w \partial_x^2 u - u \partial_x^2 w)dx.
\end{equation}
since the semi-linear term drops out. Using Green's second identity, and the compact support of $u,w$, we have that
\begin{equation}
    \int_\Omega (w \partial_x^2 u - u \partial_x^2 w)dx=
    \int_{\partial \Omega}  (w \partial_{\mathbf n}u - u\partial_{\mathbf n}w)dx=0.
\end{equation}
Thus, $\mathcal L_x$ is self-adjoint and has a self-adjoint inverse, i.e.
\begin{equation}\label{conserved}
    \int_\Omega (f_w \mathcal L_x^{-1} f_u - f_u \mathcal L_x^{-1} f_w) dx=
    \int_{\Omega } (f_w u - f_u w)dx = 0. 
\end{equation}

\end{proof}
\begin{proposition}
If $y$ is compactly supported, and $psi$ is as given, then mass
and momentum are conserved, i.e.
\begin{equation}\label{cons2}
\frac{d}{dt}\int_\Omega \begin{bmatrix} \rho & \rho u \end{bmatrix}^T d\hat x = 
    \int_\Omega \begin{bmatrix} 0 & \rho y_1 - \rho u y_2  \end{bmatrix}^T d \hat x = 0.
\end{equation}
\end{proposition}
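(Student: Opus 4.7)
The plan is to establish mass and momentum conservation separately, both reducing to integrating (\ref{eq:pdes1}) over $\Omega$ and handling boundary terms. For mass, I would integrate the continuity equation $\partial_t \rho + \partial_{\hat x}(\rho u)=0$ over $\Omega$, exchange the time derivative with the spatial integral, and observe that $\int_{\Omega}\partial_{\hat x}(\rho u)\, d\hat x$ reduces to a boundary contribution that vanishes by the homogeneous Dirichlet conditions on $\partial\Omega$ and the compact-support hypothesis on $y$. Note that since $\mathcal{L}_{\hat x}$ is a differential operator, the relation $\mathcal{L}_{\hat x} y = [\rho u,\rho]^T$ forces $\rho$ and $\rho u$ to inherit compact support in $\Omega$ from $y$. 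This disposes of the first component of (\ref{cons2}) in essentially one line.

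The core of the argument is the momentum component. I would integrate the second equation of (\ref{eq:pdes1}) over $\Omega$; the convective flux $\partial_{\hat x}(\rho u^2)$ once again contributes a vanishing boundary term, so only the nonlocal source remains. Using $\mathcal{L}_{\hat x} y = [\rho u,\rho]^T$ I would write $y_1 = \mathcal{L}_{\hat x}^{-1}(\rho u)$ and $y_2 = \mathcal{L}_{\hat x}^{-1}(\rho)$, so that
\[
\int_{\Omega}\bigl(\rho\, y_1 - \rho u\, y_2\bigr)\, d\hat x = \int_{\Omega}\bigl(\rho\,\mathcal{L}_{\hat x}^{-1}(\rho u) - (\rho u)\,\mathcal{L}_{\hat x}^{-1}(\rho)\bigr)\, d\hat x.
\]
Applying identity (\ref{conserved}) from the preceding lemma with $f_u = \rho$ and $f_w = \rho u$, the two integrals coincide and cancel exactly, so the right-hand side of (\ref{cons2}) vanishes and momentum is conserved.

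The main obstacle, as I see it, is not algebraic but a matter of function-space bookkeeping: the lemma is stated for genuinely compactly supported smooth test functions in $C^{\infty}_{\mathbb R, C}(\Omega)$, whereas $\rho(t,\cdot)$ and $\rho(t,\cdot)u(t,\cdot)$ must be argued to lie in that space at each $t$ before (\ref{conserved}) can be invoked. I would handle this by combining the hypothesis that $y$ is compactly supported (which, through $\mathcal{L}_{\hat x} y = [\rho u,\rho]^T$, forces $\rho$ and $\rho u$ to have compactly contained support in $\Omega$) with smoothness inherited from the smooth initial data (\ref{eq:bvp_pdes}) and the regularity of the flow. Once this point is settled, the interchange of $\partial_t$ with the spatial integral and the application of self-adjointness are routine, and the proposition follows.
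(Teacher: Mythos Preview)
Your proposal is correct and follows essentially the same approach as the paper: integrate the conservation laws over $\Omega$, interchange $\partial_t$ with the spatial integral via the Leibniz rule, drop the flux terms as boundary contributions, and kill the source term using the self-adjointness identity (\ref{conserved}) from the preceding lemma. The paper's proof is terser and does not spell out the function-space bookkeeping you flag, but the argument is the same.
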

\begin{proof}We obtain (\ref{cons2}) by simply integrating the conservation
laws in (\ref{eq:pdes1}) over the entire space and apply the Leibniz rule.
The conclusion follows directly from the self
adjointness of the inverse in (\ref{conserved}). The proposition holds for any self-adjoint alignment operator.
\end{proof}

\subsection{Computational Methods}

For compactness, we re-write the PDEs (\ref{eq:pdes1}) as

\begin{equation}
\begin{cases}
\partial_t U + \partial_{\hat x} F(U) = S(U, Y) \\
\mathcal L_{\hat x} Y = U \\
\end{cases}
\label{eq:1aux}
\end{equation}
with $U = \begin{bmatrix}\rho, \rho u \end{bmatrix}^T$, $Y = \begin{bmatrix}y_2, y_1 \end{bmatrix}^T$, $F = \begin{bmatrix} \rho u , \rho u^2 \end{bmatrix}^T$,
and $S = \begin{bmatrix}0, \rho y_1 - \rho u y_2 \end{bmatrix}^T$. Recall the transformation $m = \rho u$. 
From this, the flux Jacobian is given by
\begin{equation}
    \mathbf D_U F :=\begin{bmatrix}
        0 & 1 \\ -u^2 & 2u
    \end{bmatrix}
\end{equation}
which is not diagonalizable, and thus the system is only weakly hyperbolic. Its eigenvalues are $\pm u$.
With these notations established, we now detail the numerical solution of the PDEs.
\subsubsection{Hyperbolic Solver.}
To solve the hyperbolic system, we apply the finite volume method \citep{leveque_2002}. To begin, we define the sequence of points $\{\hat x_0, ..., \hat x_i, ..., \hat x_N \}$
which are the centers of the cells 
$I_i := [\hat x_{i-\frac{1}{2}}, \hat x_{i+\frac{1}{2}})$. Then, we average the PDE
over these cells, which gives
\begin{equation}
    \frac{1}{\lambda(I_i)}\frac{d}{dt}\int_{I_i} U d\hat x = 
    -\frac{1}{\lambda(I_i)}\int_{I_i} \partial_{\hat x} F d\hat x + 
    \frac{1}{\lambda(I_i)}\int_{I_i} S d\hat x
\end{equation}
where $\lambda(\cdot)$ denotes the length of an interval. Suppose these are
identical, so $\Delta \hat x :=\lambda(I_i) \forall i$. Then, using the divergence
theorem, and replacing the integrals of $U,F, S$ with their cell-averages,\
i.e. their midpoint values $\bar U, \bar F, \bar S$, we obtain
\begin{equation}
    \frac{d}{dt} \bar U_i = 
    -\frac{1}{\Delta \hat x}(\bar F_{i+\frac{1}{2}} - \bar F_{i-\frac{1}{2}}) + 
    \bar S_i
\end{equation}

where $\bar U_i := \bar Y(\hat x_i), \bar F_i := \bar F(\hat x_i), \bar S := \bar S(\hat x_i)$.
In this work, we employ the second-order strong stability preserving Runge-Kutta scheme \citep{KURGANOV2000241} for time integration. For the fluxes, we assume piecewise linearity and use the Kurganov-Tadmor flux \citep{KURGANOV2000241}. The fluxes are given by
\begin{equation}
\begin{split}
    \bar F_{i+\frac{1}{2}}&:=\frac{1}{2}[F^*_{i} + F^*_{i+1} - \max\{|u^*_i|, |u^*_{i+1}|\}(U^*_{i+1}-U^*_{i})]\\
    U^*_{i+1} &:= U_{i+1} - \frac{\Delta \hat x}{2}minmod(\frac{U_{i+2} - U_{i+1}}{\Delta \hat x},\frac{U_{i+1} - U_{i}}{\Delta \hat x})\\
     U^*_{i} &:= U_{i} + \frac{\Delta \hat x}{2}minmod(\frac{U_{i+1} - U_{i}}{\Delta \hat x},\frac{U_{i} - U_{i-1}}{\Delta \hat x})\\
    \end{split}
\end{equation}
where $minmod(a,b) := \frac{1}{2}(sign(a)+sign(b))\min(|a|, |b|)$. 

\subsubsection{Elliptic Solver.}

To solve the elliptic equations, we apply the classical second-order finite difference method, which is
\begin{equation}
        \frac{y^j_{i+1} - 2y^j_i + y^j_{i-1}}{\Delta \hat x^2} - \lambda^2 y_i^j = -2kU^j_i
    \end{equation}
Over the interior points, this yields linear equations
\begin{equation}\label{linsys}
    (\frac{1}{\Delta \hat x^2}\mathbf A - \lambda^2 \mathbf I)y^j_{int} = -2kU^j_{int} - 
\frac{1}{\Delta \hat x^2}\begin{bmatrix} y^j_{0} & 0 & \hdots & 0 & y^j_{N}\end{bmatrix}^T,
\end{equation}
\begin{equation}
\mathbf A = \begin{bmatrix}-2 & 1 & 0 & \hdots & \hdots & 0 \\
1 &  -2 & 1 & 0& \hdots & 0 \\
0 &  1 & -2 & 1& \hdots & 0 \\
\vdots & \ddots & \ddots & \ddots & \ddots & \vdots \\
0 & 0 & 0 & 0 & 1 & -2\end{bmatrix}
\end{equation}
The matrix in (\ref{linsys}) is tridiagonal, so banded matrix algorithms \citep{golub} can be used to solve the corresponding system of equations. 
As shown in Fig. \ref{fig:speed}, using finite differences is much faster than a convolution (Riemann) sum, even when the embarrassing parallelism of the sum is exploited. 
\subsubsection{Particle Solver.}
We solve the system of particle equations using the velocity Verlet algorithm \citep{karniadakis_flockingDynamicsfractionalPDEs_2018}. Given 
a system of ODEs of the form
\begin{equation}
    \begin{cases}
    \frac{dx}{dt} &= v \\
    \frac{dv}{dt} &= a(x,v,t),
    \end{cases}
\end{equation}
with appropriate initial conditions 
and a time-discretization at steps $\{0,1,...,i,...\}$ with increment \
$\Delta t$, the discretization is
\begin{equation}
    \begin{split}
    v_{i+\frac{1}{2}} &= v_i + \frac{1}{2}a(x_i,v_i, t_i)\Delta t\\
    x_{i+1} &= x_i + \Delta t v_{i+\frac{1}{2}} \\
    v_{i+1} &= v_i + \frac{\Delta t}{2}[a(x_i,v_i,t_i) + a(x_{i+1}, v_{i+\frac{1}{2}}, t_{i+1})].
    \end{split}
\end{equation}
\begin{figure}[ht!]
        \centering
        \includegraphics[trim=0 0 0 40,clip,width=0.4\textwidth]{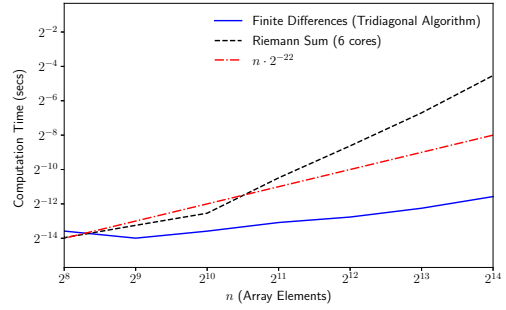}
        \caption{Computation Times for Nonlocal Terms using Finite Differences and Riemann Sum.}
        \label{fig:speed}
\end{figure}
%
%
%


\section{Numerical Results and Higher Dimensions}
\label{Sec:Results}

In this section we present numerical simulations of one-dimensional nonlocal flocking dynamics, by solving $(a)$ the agent-based Cucker-Smale model using the velocity Verlet method, and 
$(b)$ the macroscopic model with initial conditions whose support is the
interval $[-\pi, \pi]$. Our aim is to verify that the agent based and continuum based approaches to the flocking problem produce similar results.

In the following, the initial density and velocity are given by
\begin{align}
\rho_0(\hat x) &= \frac{\pi}{2L} \cos(\frac{\pi \hat x}{L}), \\ 
u_0(\hat x) &= -c \sin(\frac{\pi \hat x)}{L}),\ \hat x\in[-\frac L 2,\frac L 2], 
\end{align}
i.e. it is assumed that $\rho_0(\hat x) = u_0(\hat x) = 0,\ \forall \hat x\notin [-\frac L 2,\frac L 2]$,
where we have used $L=2\pi$.

\subsection{Cucker-Smale Model Simulation}

In all simulations, we take $\lambda=1$, $k=4$.
For the particle simulation, we use $N=10^4$ particles. For the macro-scale
simulation, we use $\Delta \hat x = \frac{2\pi}{600}$ as the spatial increment. In
both simulations, we take $\Delta t = .001$ as the time increment.

In both cases, the support of the initial profile shrinks as the bulk comes
together.  The semi-linear Poisson-forced Euler system is highly dissipative, and the momentum profile
is damped until it flattens (although it is conserved over the domain), and the system attains an equilibrium
distribution.
Fig. \ref{fig:macros_micro} shows the agreement between
the particle model and the macro-scale model.
\begin{figure}[ht!]
        \centering
        \includegraphics[trim=10 20 10 80, clip,width=0.35\textwidth]{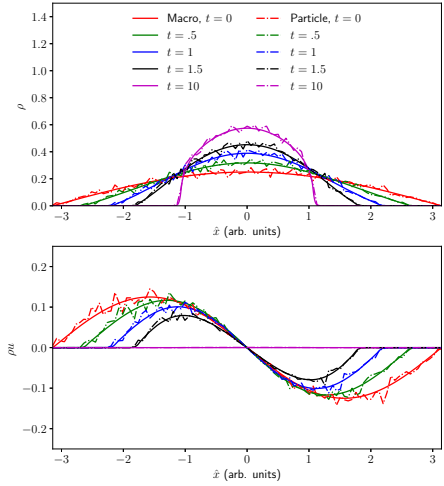}
        \caption{Evolution of the Probability Densities $\rho(t,\hat x)$ and Momentum Densities $m(t,\hat x)$ as computed by solving the macro-scale model and the particle model (dashed-line).}
        \label{fig:macros_micro}
\end{figure}



\subsection{Higher Dimensions}

In higher dimensions, the radial symmetry of the interaction function $\psi$ 
suggests the use of a \emph{singular kernel}. 
Singular kernels have been extensively studied in the literature and, under
mild assumptions in the initial conditions, have been shown to result in flocking behavior
while, at the same time, avoiding collisions \citep{Ahn2012OnCI}.%

In the BVP of the augmented system of PDEs (\ref{eq:pdes}) 
with the initial and boundary conditions (\ref{eq:bvp_pdes}), 
we select the linear differential operator 
(see also \citep{mavridis2020learning}):
\begin{equation}
\mathcal{L}_x= - k^{-d/2} ( \nabla_x^2 - \lambda^2)
\end{equation}
and $\Omega = B_d(0,r):=\cbra{x\in\mathbb{R}^d:\|x\| < r}$,
which results in a Green's function of the form
\begin{align}
\hat\psi(x,s) = \psi(x-s) + \phi(x,s)
\end{align}
where $\psi$ is given by
\begin{equation}
\begin{aligned}
\psi(x,s) &= \tilde\psi(\|x-s\|) \\
&= \pbra{\frac{k}{2\pi}}^{d/2} \pbra{\frac{\lambda}{\|x-s\|}}^{d/2-1} 
K_{d/2-1}(\lambda \|x-s\|)
\end{aligned}
\end{equation}
with $K_\alpha(\cdot)$ being the modified Bessel function of the second kind of order $\alpha$, 
and $\phi$ is a function such that
\begin{equation}
\begin{aligned}
&\mathcal{L}_s \phi(x,s) = 0,\ s\in B_d(0,r) \\
&\phi(x,s) = -\psi(x,s),\ s\in\partial B_d(0,r)  
\end{aligned}
\end{equation}

For $s\in \partial B_d(0,r)$ we have

\begin{equation}
\begin{aligned}
\|x-s\|^2 &= \|x\|^2 - 2 \abra{x,s} + \|s\|^2 \\
&= \|x\|^2 \| \frac{s}{r} - \frac{r x}{\|x\|^2}\|^2
\end{aligned}
\end{equation}
%
and it can be shown that 
%
\begin{equation}
\phi(x,s) = -\tilde\psi(\frac 1 r \|x\|\| s - r^2 \frac{x}{\|x\|^2}\|).
\end{equation}

The interaction function $\hat\psi$ 
is affected by the bounded domain in the same way as in the 
one-dimensional case, and depends on
the parameter values $k$ and 
$\lambda$ as illustrated in 
Fig.\ref{fig:psi_param_high} for the $2$-dimensional case.
 
\begin{figure}[b]
        \centering
        \includegraphics[trim=60 20 110 60,clip,width=0.17\textwidth]{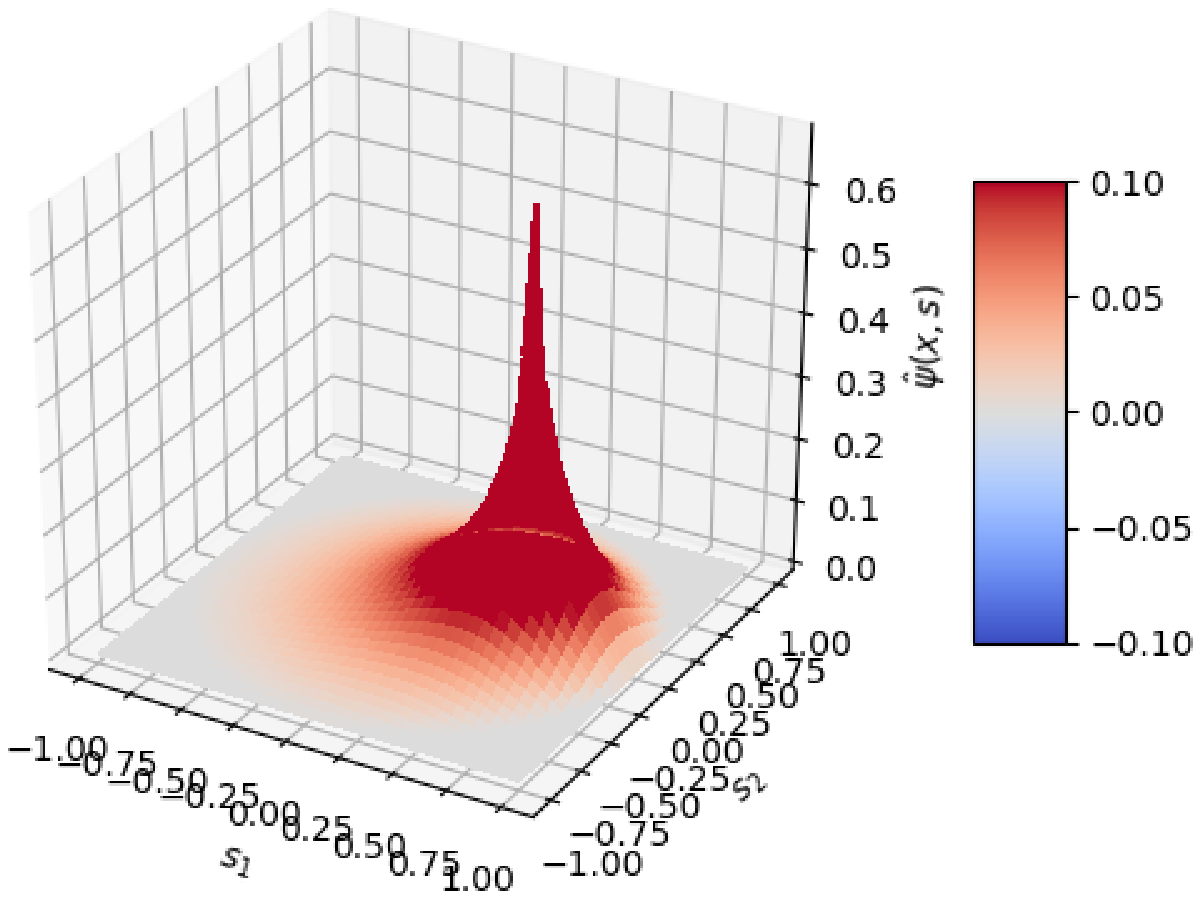}
        \includegraphics[trim=60 20 110 60,clip,width=0.17\textwidth]{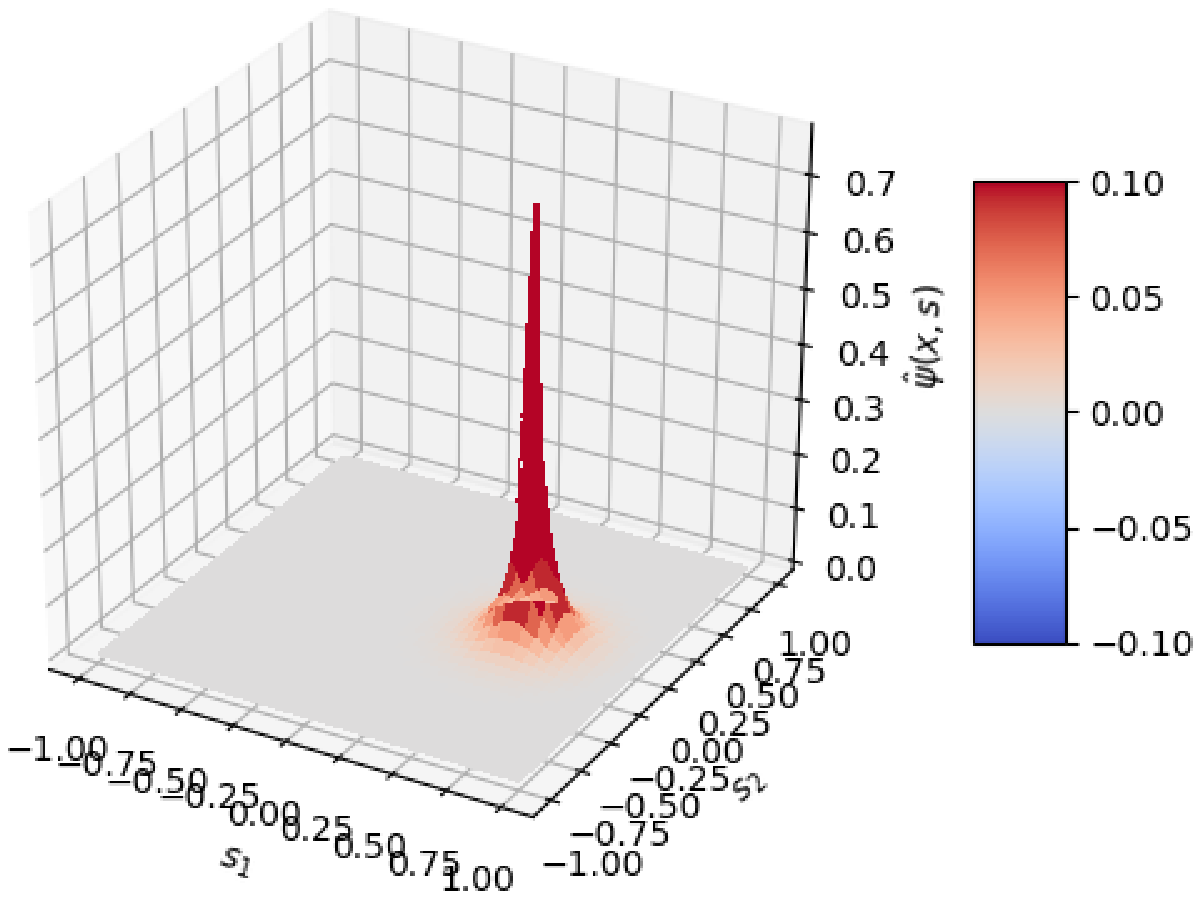}
        \caption{The effect of the parameters $k$, $\lambda$ on the 
        profile of the interaction function $\hat\psi((0,0.5),s)$, $s\in B_2(0,1)$. Left: $(k,\lambda)=(1,0.5)$. Right: $(k,\lambda)=(2,10)$.}
        \label{fig:psi_param_high}
\end{figure}

\section{Conclusion}
\label{Sec:Conclusion}

A family of compactly supported 
parametric interaction functions in the general 
Cucker-Smale flocking dynamics was proposed such that the  
macroscopic system of mass and momentum balance 
equations with non-local damping terms
can be converted to an augmented system of coupled PDEs 
in a compact set.
We approached the computation of the non-local damping
using the standard finite difference treatment of the chosen differential operator, which was solved using banded matrix algorithms.
The expressiveness of the proposed interaction functions 
may be utilized for parametric learning from trajectory data.

\bibliography{bib_pde_flocking,intro_refs}             
                                                   



\end{document}